\documentclass[final,twocolumn,pra,superscriptaddress,showpacs]{revtex4}
\usepackage{amssymb,amsmath,amsthm}

\newtheorem*{theorem*}{Theorem}
\newtheorem{lemma}{Lemma}

\providecommand{\abs}[1]{{\lvert{#1}\rvert}}
\providecommand{\bra}[1]{{\langle{#1}\rvert}}
\providecommand{\ket}[1]{{\lvert{#1}\rangle}}
\providecommand{\bracket}[2]{{\langle{#1}|{#2}\rangle}}

\DeclareMathOperator{\rank}{rank}
\DeclareMathOperator{\tr}{tr}
\DeclareMathOperator{\supp}{supp}

\newcommand{\ketbra}[2]{{\ket{#1}\!\bra{#2}}}
\newcommand{\pr}[1]{\ketbra{#1}{#1}}
\newcommand{\Hilbert}{{\mathcal H}}
\newcommand{\affDuess}{Institut f\"ur Theoretische Physik III,
             Heinrich-Heine-Universit\"at D\"usseldorf,
             D-40225 D\"usseldorf,
             Germany}

\begin{document}

\title{Commutator Relations Reveal Solvable Structures in Unambiguous State 
 Discrimination}
\author{M. Kleinmann}\email{kleinmann@thphy.uni-duesseldorf.de}
\author{H. Kampermann}
\affiliation{\affDuess}
\author{Ph. Raynal}
\affiliation{Quantum Information Theory Group,
             Institut f\"ur Theoretische Physik I, and
             Max-Plank-Forschungsgruppe,
             Institut f\"ur Optik, Information und Photonik,
             Universit\"at Erlangen-N\"urnberg,
             D-91058 Erlangen,
             Germany
            }
\author{D. Bru\ss}
\affiliation{\affDuess}

\pacs{03.67.-a, 03.65.-w, 02.10.Yn}
\begin{abstract}
We present a criterion, based on three commutator relations, that allows to 
 decide whether two self-adjoint matrices with non-overlapping support are 
 simultaneously unitarily similar to quasi-diagonal matrices, i.e.,  whether 
 they can be simultaneously brought into a diagonal structure with 
 $2\times2$-dimensional blocks.
Application of this criterion to unambiguous state discrimination provides a 
 systematic test whether the given problem is reducible to a solvable 
 structure.
As an example, we discuss unambiguous state comparison.
\end{abstract}
\maketitle

\section{Introduction}
The commutator of two self-adjoint operators, which act on a Hilbert space, is 
 a fundamental concept in quantum mechanics: two observables can be measured 
 without uncertainty if and only if their commutator vanishes.
This physical interpretation is connected to the mathematical fact that two 
 Hermitian matrices can be diagonalized simultaneously if and only if their 
 commutator is zero.
A natural question to ask is when two Hermitian matrices can be simultaneously 
 brought into a block-diagonal structure with blocks of the lowest non-trivial 
 size, namely size $2\times 2$.
Such structures are known as quasi-diagonal form and criteria for existence 
 have been studied in Ref.~\cite{Watters:1974LAA,Laffey:1977LAA}:
Watters \cite{Watters:1974LAA} showed that a family of normal matrices can be 
 simultaneously brought into a quasi-diagonal form if and only if each member 
 of the family commutes with the squared commutator of an element of the family 
 with any element from the algebra generated by the family.
(Thus, testing this criterion requires to show that infinitely many commutators 
 vanish.)
Laffey \cite{Laffey:1977LAA} studied a family with two members only.
He showed that when the matrices in the family are positive semi-definite, then 
 they are simultaneously unitarily similar to quasi-diagonal matrices if and 
 only if six certain commutators vanish.


The question of simultaneous quasi-diagonalizability has a physical application 
 in unambiguous discrimination of quantum states (see the next paragraph).
In that context, it is sufficient to deal with positive semi-definite operators 
 with non-overlapping supports (the support of an operator is the 
 orthocomplement of its kernel).
As we will show, this restriction leads to simpler commutator criteria.
In this paper we will give a constructive proof that, given two self-adjoint 
 operators with non-overlapping supports, they have a common block diagonal 
 structure of dimension two, if and only if a set of only three commutators 
 vanishes.
These commutators are also easier to calculate than the ones given in 
 Ref.~\cite{Laffey:1977LAA}, as the latter are of maximal order seven, while 
 the former are of maximal order five.

\emph{Unambiguous state discrimination} (USD) is a strategy for distinguishing 
 non-orthogonal quantum states without being allowed to make an error.
As it is impossible to discriminate non-orthogonal quantum states with unit 
 probability, the measurement has to have inconclusive outcomes.
The optimal USD strategy is the one that maximizes the success probability 
 (i.e., minimizes the probability to get an inconclusive result).
A different possibility to discriminate quantum states is called \emph{minimum 
 error discrimination}, where one minimizes the probability of making an error 
 in the state identification.

In this contribution we want to focus onto the first strategy, namely 
 unambiguous state discrimination.
For two density operators $\rho_1$ and $\rho_2$, acting on the Hilbert space 
 $\Hilbert$ of finite dimension, this task is described by a positive 
 operator-valued measure (POVM) on $\Hilbert$, consisting of three positive 
 operators $E_1$, $E_2$, and $E_?$, with $E_1+ E_2+ E_?= \openone$.
In order to make the discrimination unambiguous, the probability of wrong 
 identification must vanish, i.e. $\tr(E_1\rho_2)= 0$ and $\tr(E_2\rho_1)= 0$.
It is natural to allow $\rho_1$ and $\rho_2$ to have \emph{a priori} 
 probabilities $p_1$ and $p_2$, respectively, where $p_1> 0$, $p_2> 0$, and 
 $p_1+ p_2= 1$.
The open problem in USD is to find a POVM $\{E_1, E_2, E_?\}$ which maximizes 
 the success probability $p_\mathrm{succ}= p_1\tr(E_1\rho_1)+ 
 p_2\tr(E_2\rho_2)$.

While the optimal solution for minimum error discrimination of two mixed states 
 is already known for more than three decades \cite{Helstrom:1976}, the optimal 
 solution for unambiguous state discrimination has been found only for the pure 
 state case \cite{Jaeger:1995PLA} and certain special cases of mixed states 
 \cite{Bennett:1996PRA,Sun:2002PRA,Rudolph:2003PRA,Raynal:2003PRA,Herzog:2005PRA,Raynal:2005PRA,Bergou:2006PRA,Zhou:2007PRA,Raynal:2007XXX}.
A partial solution for unambiguous discrimination of mixed states is provided 
 via the reductions of the density operators by the space where perfect and/or 
 no USD is possible \cite{Raynal:2003PRA}.
Otherwise, known optimal USD measurements for mixed states mainly belong to the 
 class, where the problem can be decomposed into several pure state 
 discrimination tasks \cite{Bennett:1996PRA,Herzog:2005PRA,Bergou:2006PRA}.
A general representation of such states was recently discussed by Bergou 
 \emph{et al.} \cite{Bergou:2006PRA}.

It is not obvious, how to decide whether the given density operators possess 
 such a structure.
In this contribution we present a method that allows to systematically identify 
 if the optimal USD of two mixed states can be simplified to the pure state 
 task.

The paper is organized as follows.
In Sec.~\ref{s6636} we introduce the concept of common block-diagonal 
 structures of two operators.
We specifically consider the case of two-dimensional blocks, as the optimal 
 measurement in two dimensions is well-known.
Simple commutator relations are presented to check for the existence of such a 
 structure.
In Sec.~\ref{s12765} we discuss whether the block structures are preserved by 
 the reductions.
Finally, we study the example of unambiguous state comparison 
 \cite{Barnett:2003PLA,Rudolph:2003PRA,Chefles:2004JPA,Herzog:2005PRA,Kleinmann:2005PRA}, 
 to illustrate the power of the commutator test.

\section{Block-diagonal structures}\label{s6636}
\subsection{Independent orthogonal subspaces in USD}
In Ref.~\cite{Bennett:1996PRA} Bennett \emph{et al.} analyzed the parity check 
 for a string of qubits, i.e., the question whether a sequence composed of 
 states that are either $\ket{\psi_0}$ or $\ket{\psi_1}$, with $0< 
 \abs{\bracket{\psi_0}{\psi_1}}< 1$, contains an even or odd number of 
 occurrences of $\ket{\psi_1}$.
This task is equivalent to the unambiguous discrimination of two certain mixed 
 states.
After a suitable (symmetric) choice of a basis these mixed states turned out to 
 share the same block-diagonal shape, with each block $\blacksquare$ 
 symbolizing a $2\times2$ matrix:
\begin{equation}
 \rho_1= \begin{pmatrix}
  \blacksquare&            &      \\
              &\blacksquare&      \\
              &            &\ddots
 \end{pmatrix},\quad
 \rho_2= \begin{pmatrix}
  \blacksquare&            &      \\
              &\blacksquare&      \\
              &            &\ddots
  \end{pmatrix}.
\end{equation}
The authors of Ref.~\cite{Bennett:1996PRA} argued that due to this structure an 
 optimal solution to the discrimination problem can be obtained by the simple 
 composition of the optimal solutions in each block.
The optimal solution in two dimensions is known, since only in the case of two 
 pure states the solution is not obvious and this case was solved by Jaeger and 
 Shimony \cite{Jaeger:1995PLA}.

Our aim is to provide a systematic method for finding such structures.
We start with a formal definition of a block-diagonal structure:
For a set of operators $\mathcal O$, a \emph{common block-diagonal structure}
 (CBS) is a projection-valued measure $\{\Pi_k\}$ such that all operators in 
 $\mathcal O$ commute with any $\Pi_k$.
In other words, if the operators in $\mathcal O$ have a CBS, they can be 
 simultaneously decomposed in orthogonal subspaces, and a von-Neumann 
 measurement $\{\Pi_k\}$ projects onto these subspaces.
Having the measurement outcome ``$k$'', the support of the states is reduced to 
 $\Pi_k\Hilbert$ (the image of $\Pi_k$).
Thus one can focus on performing the optimal measurement in this subspace.

A common block-diagonal structure is \emph{at most $n$-dimensional} if the rank 
 of all $\Pi_k$ is at most $n$.
In particular the existence of an at most one-dimensional CBS for a set 
 $\mathcal O$ of normal operators (a normal operator is an operator that 
 commutes with its adjoint) is equivalent to the existence of a common basis, 
 in which all operators in $\mathcal O$ are diagonal.
It is well-known (cf. e.g. Chapter~IX, Theorem~11 in 
 Ref.~\cite{Gantmacher:1959}) that for normal operators this is possible if and 
 only if all operators in $\mathcal O$ mutually commute.
We will present a commutator criterion to verify whether two operators have an 
 at most \emph{two}-dimensional CBS (2d-CBS).
This criterion, which is simpler (from an operational point of view) then the 
 one introduced by Laffey \cite{Laffey:1977LAA}, is valid in the case of 
 non-overlapping support only, but is sufficiently general in order to detect 
 any two-dimensional block structure in the case of USD.

\subsection{Diagonalizing Jordan bases: Definition and existence}
Let us first relate the idea of a 2d-CBS to a concept that is widely used in 
 the analysis of USD, namely the concept of \emph{Jordan (or canonical) bases} 
 of subspaces (cf. e.g. Ref.~\cite{Stewart:1990}):
Let $P_A$ and $P_B$ be self-adjoint projectors.
Then by virtue of the singular value decomposition, one can find orthonormal 
 bases $\{\ket{\alpha_i}\}$ of $P_A\Hilbert$ and $\{\ket{\beta_j}\}$ of 
 $P_B\Hilbert$, such that
\begin{subequations}\label{jordan12}\begin{gather}
 \bracket{\alpha_i}{\beta_j}\equiv \bra{\alpha_i}P_A P_B \ket{\beta_j}= 0
 \quad\text{for $i \ne j$,}\\
\intertext{while for $i\le \min\{\rank P_A, \rank P_B\}$,}
 \bracket{\alpha_i}{\beta_i}\equiv \bra{\alpha_i}P_A P_B \ket{\beta_i}\equiv
    \cos\vartheta_i\ge 0
\end{gather}\end{subequations}
 for some $0\le \vartheta_i\le \pi/2$.
The bases $\{\ket{\alpha_i}\}$ and $\{\ket{\beta_j}\}$ are called \emph{Jordan 
 bases} of the subspaces $P_A\Hilbert$ and $P_B\Hilbert$ and $\{\vartheta_i\}$ 
 are the corresponding (unique) \emph{Jordan angles}.
The first equation expresses the bi-orthogonality of the Jordan bases.
Note that in the case of degenerate Jordan angles (i.e., not all Jordan angles 
 are different) or if $\abs{\rank P_A- \rank P_B}\ge 2$, the Jordan bases are 
 not unique.

For the analysis of USD, it turns out to be fruitful to consider density 
 operators, which are diagonalized by a pair of Jordan bases 
 \cite{Bergou:2006PRA}:
For two normal operators $A$ and $B$, \emph{diagonalizing Jordan bases} are 
 Jordan bases of $\supp A$ and $\supp B$, which diagonalize $A$ and $B$, 
 respectively.
Of course, such diagonalizing Jordan bases do not always exist.
As mentioned in Ref.~\cite{Herzog:2007PRA}, the existence of such bases 
 involves the presence of a 2d-CBS.
However the converse is in general not true.
It is possible that already in two dimensions no pair of diagonalizing Jordan 
 bases exists:
Consider the positive semi-definite matrices
\begin{equation}
 A=\begin{pmatrix} 1 & 0 \\ 0 & 2 \end{pmatrix} \quad\text{and}\quad
 B=\begin{pmatrix} 1 & 1 \\ 1 & 1 \end{pmatrix}.
\end{equation}
Then up to some complex phases, the only orthonormal basis of $\supp A$ that 
 diagonalizes $A$ is the canonical basis $\{(1,0),(0,1)\}$ while $\supp B$ is 
 spanned by $(1,1)$.
But $(1,1)$ is orthogonal to neither $(1,0)$ nor $(0,1)$, i.e., no 
 diagonalizing Jordan bases exist.

The exact relation between 2d-CBS and diagonalizing Jordan bases is given by 
 the following
\begin{lemma}\label{p16663}
Let $A$ and $B$ be normal operators acting on $\Hilbert$.
Then diagonalizing Jordan bases of $A$ and $B$ can be found if and only if a 
 2d-CBS of $A$ and $B$ exists and $[A,ABA]= 0$ and $[B,BAB]= 0$.
\end{lemma}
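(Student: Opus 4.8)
The plan is to prove both implications by splitting $\Hilbert$ into the orthogonal subspaces on which $A$ and $B$ act jointly, and then reducing everything to blocks of dimension at most two, where the statement can be verified by hand.

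For the \emph{only if} part, assume diagonalizing Jordan bases $\{\ket{\alpha_i}\}$ of $\supp A$ and $\{\ket{\beta_j}\}$ of $\supp B$ are given. I would set $V_i=\operatorname{span}\{\ket{\alpha_i},\ket{\beta_i}\}$ whenever both vectors are present, take one-dimensional spans for the remaining unpaired basis vectors, and split $(\supp A+\supp B)^\perp=\ker A\cap\ker B$ into one-dimensional pieces. Orthonormality of each basis together with the bi-orthogonality relations $\bracket{\alpha_i}{\beta_j}=0$ for $i\ne j$ gives $V_i\perp V_j$ for $i\ne j$, so the associated projectors $\{\Pi_k\}$ form a projection-valued measure of rank at most two. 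Writing $\ket{\beta_i}=\langle\alpha_i|\beta_i\rangle\ket{\alpha_i}+\ket{\gamma_i}$, the relations $\bracket{\alpha_j}{\beta_i}=0$ for $j\ne i$ force $\ket{\gamma_i}\perp\supp A$; since $A$ is normal, $\ker A=(\supp A)^\perp$, so $A$ and $A^\dagger$ leave every $V_k$ invariant, and symmetrically for $B$. Hence each $\Pi_k$ commutes with $A$ and $B$, i.e.\ $\{\Pi_k\}$ is a 2d-CBS. Finally, on a two-dimensional block $V_k$ the operator $A$ scales $\ket{\alpha_k}$ and annihilates the orthogonal direction $\ket{\gamma_k}$, so $A|_{V_k}$ has rank one; consequently $ABA$ restricted to $V_k$ is a scalar multiple of $\ketbra{\alpha_k}{\alpha_k}$ and commutes with $A|_{V_k}$, while the one-dimensional blocks are trivial. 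Adding the blocks back up yields $[A,ABA]=0$, and the same argument with $A$ and $B$ exchanged gives $[B,BAB]=0$.

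For the \emph{if} part, let $\{\Pi_k\}$ be a 2d-CBS and $V_k:=\Pi_k\Hilbert$. Since $A$ and $B$ commute with every $\Pi_k$, they decompose as $A=\bigoplus_kA_k$, $B=\bigoplus_kB_k$ with $A_k,B_k$ the restrictions to $V_k$, $\dim V_k\le2$, one has $\supp A=\bigoplus_k\supp A_k$, and the hypotheses restrict to $[A_k,A_kB_kA_k]=0$ and $[B_k,B_kA_kB_k]=0$ in every block. It therefore suffices to find, inside each block, Jordan bases of $\supp A_k$ and $\supp B_k$ that diagonalize $A_k$ and $B_k$: collecting them over all $k$ gives orthonormal bases of $\supp A$ and $\supp B$ that diagonalize $A$ and $B$, and bi-orthogonality is inherited block by block (within a block from the local Jordan bases, between blocks because $V_k\perp V_{k'}$) after a relabeling that respects the pairing convention of Eq.~\eqref{jordan12}. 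Inside a block I would argue by cases on $\rank A_k$ and $\rank B_k$. If both are at most one, any unit vectors spanning $\supp A_k$ and $\supp B_k$ are already diagonalizing Jordan bases, and the two commutator conditions hold automatically, since a rank-one normal operator is a scalar times the projector onto its support. Otherwise some block operator, say $A_k$, has rank two; then $[A_k,B_k]=0$ --- immediately if $A_k$ is scalar, and otherwise because $A_k$ has two distinct nonzero eigenvalues with orthogonal rank-one eigenprojectors $Q_1,Q_2$, so that expanding $A_kB_kA_k$ in this eigenbasis and imposing $[A_k,A_kB_kA_k]=0$ forces $Q_1B_kQ_2=Q_2B_kQ_1=0$. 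Since $A_k$ and $B_k$ then share an eigenbasis and $\supp A_k$, $\supp B_k$ are spanned by subsets of it, one picks the Jordan bases from this common basis and bi-orthogonality is automatic.

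I expect the core of the argument to be the \emph{if} part, specifically the step that upgrades the weak relation $[A_k,A_kB_kA_k]=0$ to genuine commutativity $[A_k,B_k]=0$ in a full-rank two-dimensional block --- this is where the commutator hypothesis actually does work, and where Laffey's and our criteria ultimately get their force. The remaining points --- orthogonality of the blocks, the identity $\supp A=\bigoplus_k\supp A_k$, the blockwise assembly and reindexing of Jordan bases, and the treatment of the degenerate situations (scalar blocks, blocks with $\rank A_k\ne\rank B_k$, the trivial block $\ker A\cap\ker B$, and degenerate Jordan angles) --- are essentially bookkeeping, but they must be handled carefully to make the \emph{only if} construction and the \emph{if} assembly watertight.
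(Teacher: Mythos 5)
Your proposal is correct and follows essentially the same route as the paper: the forward direction reads off the 2d-CBS and the commutators directly from the structure of the diagonalizing Jordan bases, and the backward direction works block by block, with the key step being that full rank of $A_k$ (resp.\ $B_k$) upgrades $A_k[A_k,B_k]A_k=0$ to $[A_k,B_k]=0$, the rank-one-versus-rank-one case being handled by taking the spanning vectors of the supports. Your spectral-projector argument for that key step is just a more explicit version of the paper's invertibility argument.
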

\begin{proof}
Assume that diagonalizing Jordan bases of $A$ and $B$ exist.
Then their structure readily provides an appropriate 2d-CBS.
Furthermore, by writing $A$ and $B$ in diagonalizing Jordan bases, i.e., 
 $A=\sum_i a_i\pr{\alpha_i}$ and $B=\sum_j b_j\pr{\beta_j}$, and using 
 Eqs.~\eqref{jordan12}, it is easy to verify that $[A,ABA]=0$ and $[B,BAB]= 0$ 
 holds.

For the contrary it is enough to prove the assertion in each subspace 
 $\Pi_k\Hilbert$, where $\{\Pi_k\}$ is a 2d-CBS of $A$ and $B$.
Since $A$ and $B$ commute with all projectors $\Pi_k$, in each subspace the 
 operators $A_k\equiv \Pi_kA\Pi_k$ and $B_k\equiv \Pi_kB \Pi_k$ are again 
 normal.
First suppose that $A_k$ has maximal rank, i.e., rank two.
Since $A_k$ has full rank in $\Pi_k\Hilbert$, the condition $0= \Pi_k [A, ABA] 
 \Pi_k= A_k[A_k, B_k]A_k$ is equivalent to $\Pi_k[A_k, B_k]\Pi_k\equiv [A_k, 
 B_k]= 0$, i.e., both operators can be diagonalized simultaneously and hence in 
 particular diagonalizing Jordan bases exist.
(An analogous argument holds if $B_k$ has maximal rank.)
The remaining non-trivial case is that both operators have rank one, in which 
 case the diagonalizing Jordan bases are given by the vector spanning the 
 support of each operator.
\end{proof}

\subsection{Construction of diagonalizing Jordan bases}\label{secconsjordan}
It is a simple observation, that if diagonalizing Jordan bases for two normal 
 operator $A$ and $B$ exist, then necessarily all commutators of the structure 
 $[A, ABA]$, $[A, AB^2A]$ and so forth vanish (see proof of Lemma 1).
In the following Lemma we will state that certain of these commutators already 
 suffice to explicitly construct a pair of diagonalizing Jordan bases.
\begin{lemma}
Let $A$ and $B$ be self-adjoint operators on $\Hilbert$ with $[A, ABA]= 0$, 
 $[A, AB^2A]= 0$ and $[B, BA^2B]= 0$.
Furthermore denote by $\{\ket k\}$ an orthogonal basis of $\supp A$ which 
 simultaneously diagonalizes $A$, $ABA$ and $AB^2A$.

Then there exists vectors $\{\ket\nu\}$, such that (up to normalization), 
 $\{A\ket k\}$ and $\{BA\ket k\}\cup \{\ket\nu\}$ are diagonalizing Jordan 
 bases of $A$ and $B$.
\end{lemma}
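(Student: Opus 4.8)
The plan is the following. The $A$-side of the claim is immediate: since $\ket{k}\in\supp A=(\ker A)^\perp$ we have $A\ket{k}=a_k\ket{k}$ with $a_k\neq0$, so $\{A\ket{k}\}$ equals $\{\ket{k}\}$ up to scaling and, after normalization, is an orthonormal basis of $\supp A$ that diagonalizes $A$. The real content is on the $B$-side, and I would establish it by showing that, after discarding the zero vectors, $\{BA\ket{k}\}$ consists of pairwise orthogonal eigenvectors of $B$ spanning $V:=B\,\supp A$, that $W:=\supp B\cap V^\perp$ is orthogonal to $\supp A$, and that $\{\ket{\nu}\}$ can be taken as an orthonormal eigenbasis of $B|_W$; the Jordan bi-orthogonality then follows by inspection.

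First I would extract from the hypotheses two orthogonality relations on $\supp A$. Since $\{\ket{k}\}$ diagonalizes $A$ and $ABA$, one gets $AB\ket{k}\propto\ket{k}$ and hence $\bra{j}B\ket{k}=0$ for $j\neq k$ in $\supp A$ (using that $A$ is invertible on $\supp A$); accordingly I write $B\ket{k}=\beta_k\ket{k}+\ket{w_k}$ with $\beta_k=\bra{k}B\ket{k}$ real and $\ket{w_k}\perp\supp A$. Likewise, diagonalizing $A$ and $AB^{2}A$ gives $\bra{j}B^{2}\ket{k}=0$ for $j\neq k$, i.e.\ $B\ket{j}\perp B\ket{k}$, so the nonzero $B\ket{k}$ form an orthogonal basis of $V=B\,\supp A$. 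The third relation I would use through $BA^{2}B=(AB)^\dagger(AB)$: its range is $\operatorname{range}(BA)=B\,\supp A=V$ and its kernel is $\ker(AB)=V^\perp$, so $[B,BA^{2}B]=0$ forces $V$, $V^\perp$, and hence $W=\supp B\cap V^\perp$, to be $B$-invariant.

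The crux is that each nonzero $B\ket{k}$ is an eigenvector of $B$, and I expect the implication $\beta_k=0\Rightarrow B\ket{k}=0$ to be where the argument really bites. If $\beta_k=0$ then $B\ket{k}=\ket{w_k}\in\ker A$, so $BA^{2}B\ket{k}=0$, i.e.\ $\ket{k}\in\ker(BA^{2}B)=V^\perp$; as $V^\perp$ is $B$-invariant, $B\ket{k}\in V^\perp$, but also $B\ket{k}\in V$, whence $B\ket{k}=0$. Therefore $\beta_k\neq0$ whenever $B\ket{k}\neq0$. For such $k$ I then expand $B^{2}\ket{k}=B(B\ket{k})\in V$ in the orthogonal basis $\{B\ket{j}:B\ket{j}\neq0\}$ of $V$, say $B^{2}\ket{k}=\sum_{j}c_j B\ket{j}$; taking the inner product with $\ket{i}$ for any $i\neq k$ with $B\ket{i}\neq0$, and using $\bra{i}B\ket{j}=\beta_i\delta_{ij}$ on $\supp A$ together with $\bra{i}B^{2}\ket{k}=0$, gives $c_i\beta_i=0$, hence $c_i=0$. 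Thus $B^{2}\ket{k}\propto B\ket{k}$.

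Finally I would handle $W$. It is $B$-invariant, and $B|_W$ is injective because $W\subseteq\supp B=(\ker B)^\perp$, hence bijective, so $B(W)=W$; but $W\subseteq V^\perp=\ker(AB)$ gives $B(W)\subseteq\ker A$, and therefore $W=B(W)\subseteq\ker A$, i.e.\ $W\perp\supp A$. Choosing $\{\ket{\nu}\}$ to be an orthonormal eigenbasis of $B|_W$, the set $\{B\ket{k}:B\ket{k}\neq0\}\cup\{\ket{\nu}\}$ is, after normalization (and after absorbing signs so the surviving overlaps are nonnegative), an orthonormal basis of $\supp B=V\oplus W$ consisting of $B$-eigenvectors, i.e.\ it diagonalizes $B$; moreover the overlap of $\ket{i}$ with $BA\ket{k}$ is $\propto\bra{i}B\ket{k}=\beta_k\delta_{ik}$, and $\ket{i}$ is orthogonal to every $\ket{\nu}$ since $W\perp\supp A$ — which is precisely the bi-orthogonality \eqref{jordan12}. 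Apart from the eigenvector claim, the only delicate point is the possible degeneracy of $BA^{2}B$: a priori, on a degenerate eigenspace $B$ need not be diagonal in the transported basis, and it is exactly the implication $\beta_k=0\Rightarrow B\ket{k}=0$ — a genuine consequence of $[B,BA^{2}B]=0$ — that excludes this.
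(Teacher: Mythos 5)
Your proof is correct. It follows the paper's overall skeleton (transport the basis by $BA$, complete by eigenvectors $\ket\nu$, get bi-orthogonality from the diagonality of $ABA$), but at the one step with real content---why $BA\ket k$ is an eigenvector of $B$, including the degenerate case---you take a genuinely different route. The paper's proof is a single operator identity: with $w_k,v_k$ the eigenvalues of $ABA$ and $AB^2A$ on $\ket k$, it computes $w_k\,B(BA\ket k)=BBA(ABA\ket k)=BA^2B\,BA\ket k=v_k\,BA\ket k$, using $[B,BA^2B]=0$ once to pull $B$ through; since $v_k$ is proportional to the squared norm of $BA\ket k$, this at once gives the eigenvector property (with eigenvalue $v_k/w_k$) and shows $BA\ket k=0$ when $w_k=0$, which is exactly your $\beta_k=0$ case. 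You instead read $[B,BA^2B]=0$ structurally: $BA^2B=(AB)^\dagger(AB)$ has range $V=B\,\supp A$ and kernel $V^\perp$, so $V$ and $V^\perp$ are $B$-invariant, which disposes of the $\beta_k=0$ case via $B\ket k\in V\cap V^\perp=\{0\}$ and yields the eigenvector property by expanding $B^2\ket k\in V$ in the orthogonal family $\{B\ket j\}$ against $\bra i B^2\ket k=0$. Your version is longer but buys two things the paper leaves implicit: an actual construction of the completing vectors (the paper's ``one readily finds $\ket\nu$'' becomes your $W=B(W)\subseteq\ker A$, which also gives $\ket\nu\perp\supp A$ directly, where the paper argues via $b_\nu\bra\nu A\ket k=\bra\nu BA\ket k=0$), and an explicit identification of the danger posed by degeneracies of $AB^2A$ together with how the third commutator removes it.
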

\begin{proof}
First note that all vectors $BA\ket k$ are mutually orthogonal (or trivial), 
 since the basis $\{\ket k\}$ diagonalizes $AB\, BA$.
Now consider the following expression
\begin{equation}\label{e19032}\begin{split}
 w_k B(BA\ket k)&= BBA(A B A\ket k)\\
    &= BA^2 B\, B A\ket k\\
    &= v_k BA\ket k,
\end{split}\end{equation}
 where $w_k$ denotes the eigenvalue of $ABA$ for $\ket k$ and $v_k$ denotes the 
 eigenvalue of $AB^2A$ for $\ket k$.
In the second step we used $[B,BA^2B]= 0$.
The right hand side can only vanish, if $BA\ket k= 0$.
Hence due to Eq.~\eqref{e19032}, $BA\ket k\in \supp B$ is either trivial or is 
 an eigenvector of $B$.
Furthermore one readily finds eigenvectors $\ket\nu\in \supp B$ of $B$ that 
 complete the orthogonal basis of $\supp B$.
These vectors are also orthogonal to all $A\ket k$, since by construction 
 $b_\nu \bra \nu A \ket k= \bra \nu B A \ket k= 0$, where $b_\nu\ne 0$ is the 
 eigenvalue of $B$ for $\ket \nu$.
It remains to verify, that $\{A\ket k\}$ and $\{BA\ket k\}$ are bi-orthogonal.
But this follows from the fact that $\{\ket k\}$ diagonalizes $A B A$.
\end{proof}

Note that it is straight-forward to extend this Lemma to normal operators.
However, we are mainly interested in application for USD and hence specialize 
 the results of this section in the following form:
\begin{theorem*}
For two self-adjoint $A$ and $B$ operators on a Hilbert space of finite 
 dimension with $\supp A\cap \supp B= \{0\}$ the following statements are 
 equivalent:
(i) $A$ and $B$ have a 2d-CBS.
(ii) Diagonalizing Jordan bases of $A$ and $B$ exist.
(iii) $[A,ABA]= 0$, $[B,BA^2B]= 0$ and $[A,AB^2A]= 0$.
\end{theorem*}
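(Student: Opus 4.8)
The plan is to establish the three-way equivalence by proving the cycle (ii) $\Rightarrow$ (i) $\Rightarrow$ (iii) $\Rightarrow$ (ii); the remaining direction (ii) $\Rightarrow$ (iii) then follows automatically, and in fact also directly, by writing $A$ and $B$ in diagonalizing Jordan bases and using Eqs.~\eqref{jordan12}, in agreement with the remark preceding the last Lemma. The implication (ii) $\Rightarrow$ (i) is immediate: by Lemma~\ref{p16663} the existence of diagonalizing Jordan bases (for normal, hence a fortiori self-adjoint, operators) already entails the existence of a 2d-CBS.

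For (i) $\Rightarrow$ (iii) I would fix a 2d-CBS $\{\Pi_k\}$ and use that $A$ and $B$ commute with every $\Pi_k$ to split $A= \sum_k A_k$, $B= \sum_k B_k$ with $A_k= \Pi_kA\Pi_k$, $B_k= \Pi_kB\Pi_k$, so that $ABA= \sum_k A_kB_kA_k$, $AB^2A= \sum_k A_kB_k^2A_k$, and $BA^2B= \sum_k B_kA_k^2B_k$. Hence each commutator appearing in (iii) vanishes if and only if its restriction to every block $\Pi_k\Hilbert$ does. In a block of dimension at most two, $\supp A_k$ and $\supp B_k$ still intersect only in $\{0\}$, so $\rank A_k+ \rank B_k\le 2$: if one of the two block operators is zero the block contributes nothing, and otherwise $A_k= a\pr{\phi}$ and $B_k= b\pr{\chi}$ with $\ket\phi\not\propto\ket\chi$, in which case $A_kB_kA_k$ and $A_kB_k^2A_k$ are proportional to $A_k$ and $B_kA_k^2B_k$ is proportional to $B_k$, so all three block commutators vanish. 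This is the only place where the non-overlapping-support hypothesis enters, and it is essential: for the earlier $2\times2$ example, whose supports do overlap, the pair trivially admits a 2d-CBS, yet a direct evaluation of $[A,ABA]$ shows that (iii) fails.

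The core step is (iii) $\Rightarrow$ (ii), which I would reduce to the preceding Lemma. The crucial point is that multiplying the third relation of (iii), $[B,BA^2B]= 0$ (equivalently $BA^2B^2= B^2A^2B$), by $A$ from the left and the right yields $ABA^2B^2A= AB^2A^2BA$, that is, $[ABA,AB^2A]= 0$. Together with $[A,ABA]= 0$ and $[A,AB^2A]= 0$, this shows that the self-adjoint operators $A$, $ABA$ and $AB^2A$ commute pairwise; since each of them maps $\supp A$ into itself and annihilates its orthogonal complement, their restrictions to $\supp A$ are commuting self-adjoint operators and therefore admit a common orthonormal eigenbasis $\{\ket k\}$ of $\supp A$. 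This is exactly the basis postulated in the preceding Lemma, whose conclusion then produces vectors $\{\ket\nu\}$ for which $\{A\ket k\}$ and $\{BA\ket k\}\cup\{\ket\nu\}$ are diagonalizing Jordan bases of $A$ and $B$, i.e.\ statement (ii).

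The step I expect to be the main obstacle is the bookkeeping in (i) $\Rightarrow$ (iii)---making precise that the block decomposition genuinely reduces the three operator identities to a finite case distinction, and that the disjoint-support hypothesis is exactly what eliminates the problematic blocks. By comparison, the manipulation turning $[B,BA^2B]= 0$ into $[ABA,AB^2A]= 0$ is short once noticed, but it is the hinge of the whole argument, as it is what makes the preceding Lemma applicable and so closes the cycle.
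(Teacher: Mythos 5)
Your proof is correct and follows essentially the same route as the paper's: (ii)~$\Rightarrow$~(i) from Lemma~1, (iii)~$\Rightarrow$~(ii) from Lemma~2, and the identical rank-counting argument ($\rank A_k+\rank B_k\le 2$ in each block, forcing $A_kB_kA_k\propto A_k$ etc.) for the remaining implication, which you merely route as (i)~$\Rightarrow$~(iii) directly rather than, as the paper does, as (i)~$\Rightarrow$~(ii) via the conditions of Lemma~1. Your observation that $[B,BA^2B]=0$ yields $[ABA,AB^2A]=0$ upon left- and right-multiplication by $A$ is a worthwhile addition, since it explicitly justifies the existence of the simultaneous eigenbasis of $A$, $ABA$ and $AB^2A$ that Lemma~2 takes for granted.
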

\begin{proof}
Remember that (ii) $\Rightarrow$ (i) follows from the structure of Jordan bases 
 (see Lemma 1), and also (ii) $\Rightarrow$ (iii) is a consequence of the 
 properties of Jordan bases (see Sec.~\ref{secconsjordan}).
The implication (iii) $\Rightarrow$ (ii) was proven in Lemma 2.
It remains to show that from (i) follows (ii).
Due to Lemma~\ref{p16663} this reduces to showing that $[A, ABA]= 0$ and $[B, 
 BAB]= 0$ for the case where (i) holds and $\supp A\cap \supp B= \{0\}$.
The condition of non-overlapping supports implies together with (i), that 
 $\rank(A_k)+ \rank(B_k) \le 2$, where $A_k = \Pi_k A \Pi_k$ and $ B_k =\Pi_k B 
 \Pi_k$, and $\{\Pi_k\}$ is a 2d-CBS of $A$ and $B$.
If either $\rank(A_k)$ or $ \rank(B_k)$ is zero, the commutators $[A_k, 
 A_kB_kA_k]$ and $[B_k, B_kA_kB_k]$ vanish trivially.
They are also equal to zero for the remaining case of $\rank(A_k)= 1= \rank 
 (B_k)$.
\end{proof}

As soon as the supports of $A$ and $B$ overlap, in general, none of the 
 commutators in the above Theorem vanishes.
But in such a situation one can make use of the fact that in two dimensions, 
 the square of all commutators of the form $[A_k,B_k]$, $[A_k,B_k^2]$ and so 
 forth is proportional to the identity operator.
Laffey \cite{Laffey:1977LAA} showed that for positive operators the following 
 set of commutators, given below, are already sufficient to prove the existence 
 of a 2d-CBS:

\emph{Two positive semi-definite operators $A$ and $B$ have a 2d-CBS if and 
only if \cite{Laffey:1977LAA}
\begin{equation}
\begin{split}
[[A,  B  ]^2,A]= 0,&\quad [[B,  A  ]^2,B]= 0,\\
[[A,  B^2]^2,A]= 0,&\quad [[B,  A^2]^2,B]= 0,\\
[[A^2,B  ]^2,A]= 0,&\quad [[B^2,A  ]^2,B]= 0.
\end{split}
\end{equation}
}

\section{Application to USD}\label{s12765}
We now want to apply the above analysis to unambiguous discrimination of two 
 mixed states $\rho_1$ and $\rho_2$.
We denote the combination of the density operator and the according \emph{a 
 priori} probability by $\gamma_\mu= p_\mu \rho_\mu$, such that $\tr\gamma_\mu 
 <1$ ($\mu= 1,2$).
For technical reasons (see the map $\tau_0$ below) we also allow that the 
 \emph{a priori} probabilities do not sum up to one, $\tr(\gamma_1)+ 
 \tr(\gamma_2) \le 1$.

\subsection{Preservation of block structures under reduction of USD}
In the above Theorem the density operators need to satisfy the condition 
 $\supp\gamma_1\cap \supp\gamma_2=\{0\}$, which in general is not the case.
The first reduction theorem in Ref.~\cite{Raynal:2003PRA}, however, shows how 
 to reduce any USD problem to that specific form.
But one could imagine, that this reduction might destroy an already present 
 2d-CBS, so that the combination of the first reduction theorem together with 
 the above Theorem would fail to detect certain block-diagonal structures.
As we will see here, this is not the case and the application of any of the 
 reductions in Ref.~\cite{Raynal:2003PRA} preserves any CBS.

We repeat the reductions of \cite{Raynal:2003PRA} in the language of 
 projectors:
For a pair of positive operators $(\gamma_1, \gamma_2)$, let $\tau_0$ be the 
 (non-linear) mapping
\begin{equation}
 \tau_0\colon (\gamma_1, \gamma_2)\mapsto (\gamma_1^0, \gamma_2^0),
\end{equation}
 where $\gamma_\mu^0$ (with $\mu= 1,2$) is the projection of $\gamma_\mu$ onto 
 $(\ker\gamma_1+ \ker \gamma_2)$.
In a similar fashion we define $\tau_\nu\colon (\gamma_1, \gamma_2)\mapsto 
 (\gamma_1^\nu, \gamma_2^\nu)$ (with $\nu = 1,2$) where
\begin{equation}
 \gamma_\mu^\nu= P_\nu \gamma_\mu P_\nu+
              (\openone- P_\nu) \gamma_\mu (\openone- P_\nu).
\end{equation}
Here, $P_1$ is the self-adjoint projector onto $(\ker\gamma_1+ \supp\gamma_2)$ 
 and $P_2$ the projection onto $(\ker\gamma_2+ \supp\gamma_1)$.
The reduction theorems in Ref.~\cite{Raynal:2003PRA} now read as follows:

\emph{For $\tau\in \{\tau_0, \tau_1, \tau_2\}$, the pair $(\gamma_1,\gamma_2)$ 
 and the reduced pair $\tau(\gamma_1, \gamma_2)$ can be unambiguously 
 discriminated with the same success probability \cite{Raynal:2003PRA}.}

What is relevant for our considerations is the fact that no reduction can 
 destroy any CBS, i.e., a CBS $\{\Pi_k\}$ of $(\gamma_1, \gamma_2)$ is also a 
 CBS of $\tau(\gamma_1,\gamma_2)$ for all $\tau\in \{\tau_0, \tau_1, \tau_2\}$.
In order to see this, it is enough to show that any of the projectors $P_0$, 
 $P_1$, and $P_2$  (with $P_0$ denoting the projector onto 
 $\ker\gamma_1+\ker\gamma_2$) commutes with all $\Pi_k$.
But this follows from the fact, that the range of each of the projectors is the 
 support of an operator, that commutes with all $\Pi_k$ (namely, $P_0\Hilbert= 
 \supp(2\openone- G_1- G_2)$, $P_1\Hilbert= \supp(\openone- G_1+ G_2)$ and 
 $P_2\Hilbert= \supp(\openone- G_2+ G_1)$, where $G_\mu$ is the projector onto 
 $\supp\gamma_\mu$).
Note however, in contrast, that a CBS of $\tau(\gamma_1, \gamma_2)$ is not 
 necessarily a CBS of $(\gamma_1, \gamma_2)$, thus a reduction may give rise to 
 new block-diagonal structures.

In order to check for a 2d-CBS it is necessary to first apply the reduction 
 $\tau_0$.
If the reductions $\tau_1$ and $\tau_2$ are -- from an operational point of 
 view -- feasible, then it also worth to apply those, since new 2d-CBS may 
 arise.

\subsection{Example: State comparison}
We consider a special case of unambiguous state comparison ``two out of $N$'' 
 as defined in Ref.~\cite{Kleinmann:2005PRA}.
A source emits pure states $\{\ket{\psi_1}, \cdots, {\ket{\psi_N}}\}$, each of 
 which appears with equal a priori probability $\frac1N$.
We further assume that all states have the same (real) mutual overlap, 
 $\bracket{\psi_i}{\psi_j}= \cos\vartheta$ for $i\ne j$.
Given two of these pure states, the aim is to decide unambiguously whether the 
 states are identical or not.
This task is equivalent to the discrimination of
\begin{eqnarray}
 \gamma_1&=& \frac1{N^2} \sum_{k=1}^N \pr{\psi_k\psi_k}, \\
 \gamma_2&=& \frac1{N^2} \sum_{k\ne l}^N \pr{\psi_k\psi_l}.
\end{eqnarray}

From the definition it follows that $\supp\gamma_1\cap \supp\gamma_2= \{0\}$.
Thus we can directly apply the Theorem of Sec.~\ref{secconsjordan}, i.e., we 
 test whether it is true that $[\gamma_1, \gamma_1\gamma_2\gamma_1]= 0$, 
 $[\gamma_1, \gamma_1\gamma_2^2\gamma_1]= 0$ and $[\gamma_2, \gamma_2 
 \gamma_1^2\gamma_2]= 0$.
For the first two commutators, it is sufficient to verify that 
 $\omega_{kl}\equiv \bra{\psi_k\psi_k} [\cdots]\ket{\psi_l\psi_l}= 0$ for any 
 $k$ and $l$.
Here, $[\cdots]$ stands for any of the first two commutators.
Obviously we have $\omega_{kl}= -(\omega_{lk})^*$ for all $k$ and $l$, and 
 since all overlaps are real, $\omega_{kk}=0$.
Due to the high symmetry, all $\omega_{kl}$ with $k\ne l$ must be equal.
In particular $\omega_{kl}= \omega_{lk}= -(\omega_{kl})^*$, and again due to 
 reality of the overlaps, $\omega_{kl}= 0$ must hold.

It remains to test, whether $[\gamma_2, \gamma_2\gamma_1^2\gamma_2]= 0$.
This is equivalent to showing that $\gamma_2[\gamma_2+\gamma_1, 
 \gamma_1^2]\gamma_2 =0$ or to showing that
\begin{equation}
 \gamma_2(\gamma_2+\gamma_1)\gamma_1^2\gamma_2=
  \sum_{i,j;p,q} \ketbra{\psi_i\psi_j}{\psi_p\psi_q} A_{ij,pq}
\end{equation}
 is self-adjoint.
For $i\ne j$ and also $p\ne q$, we have
\begin{equation}
 A_{ij,pq}= \sum_{k,l,n,m} c_{ik} c_{jl} c_{kn} c_{ln} c_{nm}^2 c_{mp} c_{mq},
\end{equation}
 with $c_{ij}\equiv \bracket{\psi_i}{\psi_j}= \cos\vartheta+ 
 (1-\cos\vartheta)\delta_{ij}$.
Otherwise, $A_{ij,pq}= 0$.
First we find
\begin{align}
 \sum_k c_{ik} c_{kn}&\propto \delta_{in}+ \mu,\\
\intertext{with some constant $\mu$. Also, for $p\ne q$,}
 \sum_m c_{nm}^2 c_{mp} c_{mq}&\propto \delta_{nq}+ \delta_{np}+ \sigma,
\end{align}
 where $\sigma$ is another constant.
Hence for $i\ne j$ and $p\ne q$ we have
\begin{equation}\begin{split}
 A_{ij,pq}&\propto \sum_n (\delta_{in}+\mu)(\delta_{jn}+ \mu)
    (\delta_{np}+ \delta_{nq}+ \sigma) \\
    &\propto \delta_{ip}+ \delta_{iq}+ \delta_{jp}+ \delta_{jq}+
        \mathit{const.}
\end{split}\end{equation}
In particular $A_{ij,pq}= A_{pq,ij}\equiv (A_{pq,ij})^*$ holds, which 
 demonstrates that $\gamma_2(\gamma_2+ \gamma_1)\gamma_1^2\gamma_2$ is 
 self-adjoint and therefore $\gamma_2[\gamma_2+ \gamma_1,\gamma_1^2]\gamma_2 
 =0$.

Thus we have shown that the symmetric state comparison ``two out of $N$'' can 
 be reduced to pure state discrimination.
Note that this statement is in general not true for state comparison ``$C$ out 
 of $N$'', with $C>2$, i.e., the question whether $C$ states taken from a set 
 of $N$ states (with equal overlaps) are identical or not.
In this case the third commutator does not vanish before the reductions, and 
 the corresponding state discrimination problem is not necessarily simplified 
 to the pure state case.

\section{Conclusions}
In many practical situations of unambiguous state discrimination (USD) the pair 
 of states that one wants to discriminate has a high symmetry which naturally 
 gives rise to a two-dimensional common block-diagonal structure (2d-CBS) 
 \cite{Bennett:1996PRA,Herzog:2005PRA,Bergou:2006PRA}.
In this situation the optimal USD measurement has the very same 2d-CBS 
 \cite{Bergou:2006PRA}, where each block basically is given by the pure state 
 solution of Jaeger and Shimony \cite{Jaeger:1995PLA}.

Here, we provided a tool to systematically identify whether a given USD task 
 possesses such a structure.
With the commutator relations presented in this paper it is easy to test 
 whether a 2d-CBS for two self-adjoint operators with non-overlapping support 
 exists.
In order to derive these commutator relations, we studied the connection 
 between the existence of a 2d-CBS and of diagonalizing Jordan bases.
This also led to an explicit construction procedure for such bases.

We showed that the reduction method \cite{Raynal:2003PRA} for USD can only 
 generate, but not destroy a 2d-CBS.
Thus, applying the reductions as a first step ensures that the condition of 
 non-overlapping support of the two operators is fulfilled.

We demonstrated the strength of the simple commutator relations by considering 
 unambiguous state comparison 
 \cite{Barnett:2003PLA,Rudolph:2003PRA,Chefles:2004JPA,Herzog:2005PRA,Kleinmann:2005PRA}, 
 where it is easy to show that in completely symmetric situations for the 
 specific case ``two out of $N$'' a 2d-CBS exists.

Outlook:
Note that the commutator relations in the Theorem of Sec.~\ref{secconsjordan} 
 are not symmetric in both operators (i.e., the missing commutator $[BAB, B]$ 
 already vanishes).
It would be interesting to understand the reason for  this asymmetry.
Furthermore, it would be useful to extend this concept to be applicable to more 
 than two operators and also to the detection of larger block-diagonal 
 structures (with respect to USD, e.g. four-dimensional structures would be 
 interesting).
In order to be operational, this would mean to extend the work by Watters 
 \cite{Watters:1974LAA} and Shapiro \cite{Shapiro:1979LAA} (generalization to 
 blocks of arbitrary dimension) and finding  a \emph{finite} set of commutators 
 with possibly low order.

\begin{acknowledgments}
The authors would like to thank A. Doherty, N. L\"utkenhaus, T. Meyer, D.  
 Petz, and R. Unanyan for valuable discussions, and S. Wehner for pointing out 
 references \cite{Watters:1974LAA} and \cite{Shapiro:1979LAA}.
This work was partially supported by the EU Integrated Projects SECOQC and 
 SCALA.
\end{acknowledgments}

\bibliography{the}
\end{document}